\documentclass[abstract=true, DIV=14]{scrartcl}

\usepackage[noEnd,commentColor=black]{algpseudocodex}
\usepackage{amsmath}
\usepackage{amssymb}
\usepackage{amsthm}
\usepackage{ dsfont }
\usepackage[mathscr]{euscript}
\usepackage[shortlabels]{enumitem}
\usepackage{graphicx} % Required for inserting images
\usepackage{subcaption}
\usepackage{thmtools}
\usepackage{todonotes}
\usepackage{xcolor}
\usepackage{microtype}
\usepackage{cancel}
\usepackage{stackengine}
\usepackage[labelfont=bf]{caption}

\makeatletter
\def\@fnsymbol#1{\ensuremath{\ifcase#1\or \!\;\or \!\;\or \ddagger\or
   \mathsection\or \mathparagraph\or \|\or **\or \dagger\dagger
   \or \ddagger\ddagger \else\@ctrerr\fi}}
\makeatother

\usepackage[normalem]{ulem}

\usepackage[linesnumbered,vlined]{algorithm2e}
\usepackage{xcolor}
\usepackage[many]{tcolorbox}

\SetAlgorithmName{Algorithm}{Algorithm}{List of Algorithms}
\SetAlCapFnt{\small}
\SetAlCapNameFnt{\small}
\SetAlgoCaptionSeparator{:}
\SetKwComment{Comment}{$\triangleright$}{}
\SetKwInput{KwInput}{Input}
\SetKwInput{KwOutput}{Output}
\SetKwProg{Procedure}{Procedure}{}{}
\SetKwProg{Function}{Function}{}{}

\colorlet{keywordcolor}{blue!70!black}
\colorlet{commentcolor}{green!50!black}

\SetKw{Let}{let}
\SetKw{KwTo}{to}
\SetKw{KwRet}{return}
\SetKw{Break}{break}
\SetKw{Continue}{continue}

\SetCommentSty{textcolor{commentcolor}}

\SetCommentSty{mycommfont}

\newcounter{customAlgorithm}[section]
\renewcommand{\thecustomAlgorithm}{\thesection.\arabic{customAlgorithm}}

\makeatletter
\@addtoreset{customAlgorithm}{section}
\makeatother

\usepackage{upgreek}
\usepackage[colorlinks]{hyperref}
\usepackage[capitalize]{cleveref}

\crefname{equation}{eq.}{eqs.} % Don't always capitalize
\crefname{enumi}{}{} % No label for items (default)
\crefname{icase}{case}{cases}
\crefname{ipart}{part}{parts}
\crefname{iprop}{property}{properties}
\crefname{iinv}{invariant}{invariants}

\crefformat{section}{\S\,#2#1#3}
\crefformat{subsection}{\S\,#2#1#3}
\crefrangeformat{section}{\S\S\,#3#1#4--#5#2#6}
\crefmultiformat{section}{\S\S\,#2#1#3}{,~#2#1#3}{,~#2#1#3}{,~#2#1#3}

\usepackage{authblk}
\usepackage[normalem]{ulem}

\usepackage{tikz}
\usetikzlibrary{calc}

\DeclareMathOperator{\OPT}{\mathsf{OPT}}
\DeclareMathOperator{\ALG}{\mathsf{ALG}}

\renewcommand{\alpha}{\upalpha}

\newcommand{\cR}{\mathscr{R}}

\newcommand{\connected}[1]{\def\temp{#1}\ifx\temp\empty\sim\else\overset{#1}{\sim}\fi}

\tikzset{
	point/.style={circle, fill, inner sep=1.5pt},
	smallpoint/.style={point, inner sep=1.2pt},
	tinypoint/.style={point, inner sep=1pt},
	hlbox/.style={fill, {white!90!black}},
	subrect/.style={draw, fill={white!80!cyan}}, % Sub-rectangle in some decomposition
	msrect/.style=subrect % Rectangle in merge sequence
}

\newtheorem{theorem}{Theorem}[section]
\newtheorem{lemma}[theorem]{Lemma}

\newtheorem{claim}[theorem]{Claim}

\theoremstyle{definition}

\author{L\'{a}szl\'{o} Kozma\thanks{$^{*}$ Email: \href{mailto:laszlo.kozma@tu-dresden.de}{laszlo.kozma@tu-dresden.de}}} 
\affil{Faculty of Computer Science, TU Dresden, Germany}
\title{The price of incrementality in $k$-center clustering}

\date{}

\begin{document}

\maketitle

\vspace{-0.3in}
\begin{abstract}
The $k$-center problem is one of the  best-studied and most intuitive clustering formulations. It asks, given a set of $n$ points in a metric space, for $k$ of the points to be designated as cluster centers, so that the maximum distance of an input point to its nearest center is minimized. Gonzalez's greedy algorithm from 1985 is a simple and efficient way to find a $2$-approximate solution. The algorithm has the attractive feature of \emph{incrementality}: it outputs the centers one by one, with a guaranteed $2$-approximation for every prefix of the obtained sequence of centers. 

Incrementality imposes a geometric constraint on how solutions can be built, and it is natural to ask whether this comes at a price in the quality of the solution. It is known that in polynomial time, the approximation ratio of $2$ is best possible, assuming $P \neq NP$. In this paper we show that even with \emph{unlimited} computational power, the factor $2$ cannot be improved, if the solution is required to be built incrementally.  The lower bound construction imposes a tradeoff between all $n$ levels of the clustering simultaneously; it was obtained with the help of ChatGPT, an aspect we discuss in \S\,\ref{sec3}.

\end{abstract}

\section{Introduction}\label{sec1}

In a typical formulation of the clustering problem, we are given a set $S$ of $n$ input points with a metric distance function $d$, and a positive integer $k$. The goal is to select a subset $C \subseteq S$ of $k$ cluster centers, minimizing the maximum distance of any point to its nearest center, i.e., the quantity $$\mathrm{cost} = \max_{x \in S}\min_{c \in C} d(x,c).$$

This problem is known as (\emph{discrete})\emph{ $k$-center}. A closely related variant of the problem is \emph{continuous $k$-center}, where the set $C$ of centers is selected from the ambient metric space (typically Euclidean), not necessarily from $S$. 

A natural consideration in many applications is that the value $k$ may not be known upfront, or may be subject to unexpected budget changes. (Think, e.g., of a city deciding on locations for hospital sites in a volatile political environment.) We accordingly formulate the problem in a partially \emph{online} setting: An algorithm is required to commit to centers one-by-one irrevocably, until an adversary announces the end of the process. We call this the \emph{incremental} $k$-center problem. 

The task is then, as is typical for online algorithms, to obtain a good approximation of the theoretical optimum (constructed, in this case, with full knowledge of $k$). More precisely, we denote by $\OPT_k$ the cost of the optimal solution with $k$ centers, and by $\ALG_k$ the cost of the solution found by the algorithm (recall that the algorithm must construct this solution by adding one more center to its solution with $k-1$ centers, whereas the optimum has no such restriction). The goal is to minimize the \emph{ratio} $$\cR = \max_{1 \leq k \leq n}{\frac{\ALG_k}{\OPT_k}},$$ for the worst-case input point set $S$. We emphasize that the algorithm has full offline access to the input point set $S$, and the only information withheld from it is the value $k$, which is revealed at termination. The ratio $\cR$ can be seen as analogous to the competitive ratio (i.e., the price of not knowing $k$), but we will simply refer to it as approximation ratio, understanding that it measures the worst-case performance of incremental algorithms.

Perhaps the most natural algorithm for the $k$-center problem is the greedy farthest-point algorithm due to Gonzalez~\cite{Gonzalez85}. It starts with an arbitrary first center, and then repeatedly selects the next center that is farthest from the current set of centers. Precisely, if $C$ is the current set of centers, then the next center chosen is $x \in S \setminus C$ for which $\min_{c \in C}{d(x,c)}$ is maximized. 
(One may note that the first center could also be picked in a more deliberate way; this, however, does not affect the worst-case guarantees of the algorithm.)  

A simple calculation based on the triangle inequality shows that Gonzalez's algorithm is a $2$-approximation in any metric space, i.e., the cost of its solution is at most twice the optimum. More strongly, its guarantee holds at every step of the algorithm, so even if $k$ is chosen adversarially, the algorithm maintains the approximation with respect to the optimal solution for that value of $k$. 
We refer to this feature of the algorithm as \emph{incrementality}.
In our earlier notation, Gonzalez's algorithm achieves the approximation ratio $\cR \leq 2$, which is easily seen to be tight.

Recall that in our setting, the input point set $S$ is entirely static, and revealed to the algorithm upfront. Dynamic variants of the clustering problem have also been studied, where points can be added or removed, or the input undergoes changes in different ways, e.g., see~\cite{CharikarCFM04, CrucianiFGNS24}. Occasionally, the case when points can only be added has also been called ``incremental clustering''. We hope that confusion can be avoided, and stick to the term \emph{incremental}, to maintain correspondence to the field of \emph{incremental optimization}, where the term is similarly used in the sense of building a solution step-by-step, e.g., see~\cite{BernsteinD0H22}.

\medskip

To summarize, Gonzalez's algorithm has three attractive features: polynomial running time, incrementality, and $2$-approximation. 
It is known that an approximation factor $c<2$ cannot be obtained in polynomial time (unless $P = NP$), as shown via a reduction from dominating set~\cite{HochbaumS85}. 
A natural and perhaps overlooked question is whether the factor $2$ can be improved, while preserving incrementality, without any restriction on the running time, i.e., allowing unlimited computational power. (Note that without requiring incrementality, the discrete $k$-center optimum can clearly be found in exponential time.) This is the question addressed in this paper:

\smallskip 

\textbf{\emph{Is there an incremental $k$-center algorithm with approximation ratio $\cR < 2$?}}

\smallskip

Since we place no restriction on computation, an incremental algorithm can be assumed to have access to the optimal center sets for each choice of $k$. The solutions for different $k$, however, generally do not ``nest'', and cannot be built incrementally from one another. An incremental algorithm thus needs to balance the approximation ratio across all levels of $k$, and being suboptimal at one level may help at another. We think of the approximation ratio $\cR$ as the penalty for having to build the solution step-by-step, not knowing the true value of $k$; in short, \emph{the price of incrementality}. 

In this paper we show that the answer to the above question is negative, and the price of incrementality for $k$-center is $2$. 

\begin{theorem}\label{thm1}
No incremental $k$-center algorithm can have approximation ratio $\cR < 2$.
\end{theorem}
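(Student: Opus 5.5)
The plan is to argue by contradiction via an explicit family of finite metric spaces. Fix $\varepsilon>0$ and suppose some incremental algorithm has $\cR \le 2-\varepsilon$. We will build a point set $S_\varepsilon$, given as the shortest-path metric of a weighted graph, on which every sequence of centers has some prefix $k$ with $\ALG_k \ge (2-\varepsilon)\,\OPT_k$. Since the algorithm has unlimited power, we may treat it as an arbitrary sequence $c_1,c_2,\dots,c_n$. The proof then amounts to showing that no sequence keeps every prefix within ratio $2-\varepsilon$. This makes it a purely combinatorial statement about $S_\varepsilon$, with no algorithmic content.

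The basic gadget is the three-point configuration $a,m,b$ with $d(a,m)=d(m,b)=1$ and $d(a,b)=2$. Here $\OPT_1=1$, and only the midpoint $m$ achieves ratio $<2$, since choosing $a$ or $b$ costs $2$. So ratio $<2$ forces the algorithm onto ``midpoint'' centers, and midpoints are typically useless at finer levels. The construction will stack such gadgets across scales $1=\delta_1 \gg \delta_2 \gg \cdots \gg \delta_L$. At level $k$ there is a designated set $A_k$ of candidate centers: every center outside $A_k$ leaves some point at distance $\ge (2-\varepsilon)\OPT_k$. The sets $A_k$ are made mutually incompatible, so that the optimal $(k+1)$-solutions use points near the endpoints of level-$k$ gadgets rather than their midpoints. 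Edge lengths will be tuned so that each step's slack is a small fraction of $\varepsilon$. Concretely, a midpoint of scale $\delta_k$ sits at distance $\delta_k$ from its endpoints, while the level-$(k+1)$ optimum is about $\delta_k/2$. This is arranged by placing sub-gadgets at the endpoints and making the cluster around each endpoint have radius about $\delta_k/2$.

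The key steps, in order, are as follows. (i) Define $S_\varepsilon$ and compute $\OPT_k$ for each relevant $k$, exhibiting optimal solutions and certifying lower bounds by packing arguments. A lower bound comes from finding $k+1$ points pairwise more than $2\OPT_k$ apart, or from a tailored covering argument. (ii) Prove a forcing lemma by induction on $k$: if prefixes $1,\dots,k$ all have ratio $<2-\varepsilon$, then $\{c_1,\dots,c_k\}$ must contain specific midpoints, one per gadget, up to symmetry. (iii) At the final level, use the forced prefix to show that the one additional center can only repair one gadget. Some endpoint then remains at distance $\approx \delta_k$, whereas $\OPT_{k+1}\approx \delta_k/2$, giving ratio $\ge 2-\varepsilon$. (iv) Let the number of levels grow with $1/\varepsilon$ so that the accumulated losses stay below $\varepsilon$.

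I expect the main obstacle to be step (ii) combined with the gadget design. The difficulty is ensuring that ratio $<2-\varepsilon$ really forces midpoints at every level and does not merely allow them. An algorithm could deliberately accept a ratio like $1.9$ at one level by choosing a non-midpoint, and thereby gain at later levels. The metric must therefore make every ``compromise'' center cost nearly factor $2$ somewhere. What I would try first is to make the structure self-similar, with each endpoint of a level-$k$ gadget being itself a scaled copy of the whole configuration. The forcing argument then reduces to a single recursive inequality: if $r_k$ denotes the best achievable ratio with $k$ levels, we want $r_{k+1}\ge f(r_k)$ with $r_k\to 2$. Solving this recursion would determine the scale ratios $\delta_{k+1}/\delta_k$ needed.
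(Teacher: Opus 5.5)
\textbf{There is a genuine gap.} Your proposal is a plan rather than a proof. The set $S_\varepsilon$ is never defined, no $\OPT_k$ is computed, and the forcing lemma in step (ii) is only stated as a goal; you rightly call it the crux. The recursion $r_{k+1}\ge f(r_k)$ is also left unsolved.

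Worse, your one concrete design rule undermines that lemma. The bare gadget $a,m,b$ does force $m$. But step (iii) needs extra points that let two centers cover the gadget at cost about $\delta/2$, and those points are exactly the compromise centers you worry about. Take the minimal realization on the line: $a,a',m,b',b$ at $-\delta,-\tfrac{\delta}{2},0,\tfrac{\delta}{2},\delta$. Here $\OPT_1=\delta$ and $\OPT_2=\delta/2$, yet the sequence $a',b'$ has ratio $3/2$ at level $1$ and is optimal at every later level, so nothing forces $m$. Likewise, on the line, fattening each endpoint into a cluster of radius $\delta/2$ lets the cluster point nearest $m$ act as a single center with ratio $4/3$. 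This is the same two-level tension that stops the symmetric warmup at $\phi$, and the proposal does not explain how stacking such gadgets escapes it.

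\textbf{The missing idea.} In the paper, what gets forced is not a midpoint but the isolated points on one side of a one-sided chain. Place $p_0,\dots,p_m$ on the line with gaps $d_1=\rho$, $d_i=((\rho-1)/\rho)^{i-2}$ for $2\le i\le m-1$, and $d_m=\rho(2-\rho)$, where $\rho=\rho_m$ solves $((\rho-1)/\rho)^{m-2}=2-\rho$. This yields the tail identity $\sum_{h\ge i}d_h=\rho\,d_i$, and hence $\OPT_k=d_k$.

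For $2\le k\le m-1$, every $k$-set of cost below $\rho\,d_k$ has the form $\{p_0,\dots,p_{k-2},p_j\}$ with $k\le j\le m-1$. Omitting some $p_i$ with $i\le k-2$ costs at least $d_{k-1}=\frac{\rho}{\rho-1}d_k>\rho\,d_k$. Taking $p_{k-1}$ or $p_m$ as the last center costs at least $\rho\,d_k$. In particular, every such set omits $p_{k-1}$.

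At level $1$ the center must be some $p_j$ with $1\le j\le m-2$. Level $k=j+1$ then forbids $p_j$, a contradiction.

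This argument differs from your plan in three ways:
\begin{itemize}
\item It characterizes all good solutions at each level outright, with no induction over prefixes.
\item It uses the same threshold $\rho_m$ at every level, so no losses accumulate.
\item Since $2-\rho_m\le 2^{2-m}$, it gives $\cR\ge 2-\varepsilon$ with only $O(\log(1/\varepsilon))$ points.
\end{itemize}
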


As discussed, a \emph{computational barrier} to improving the factor 2 was already known. Theorem~\ref{thm1} establishes an entirely different, \emph{structural/geometric barrier} to such an improvement. Strikingly, this barrier already holds for input from a very simple metric space: the line. We see no obvious a priori reason why the two lower bounds had to coincide, and indeed, they arise through unrelated arguments. 

\paragraph{Discrete and continuous $k$-center.} We only focus on the discrete $k$-center problem, since an improvement to the factor $2$ is easily ruled out in the continuous variant of the problem. 

Indeed, take an input with just two points. Unless the algorithm chooses exactly these points as centers, its approximation ratio will be infinite in the case $k=2$. On the other hand, choosing one of the points as the first center already implies a factor-$2$ penalty for $k=1$, compared to the optimal choice of the midpoint of the two input points. Thus, any algorithm must incur a penalty factor of at least $2$. Meanwhile, Gonzalez's algorithm does achieve a $2$-approximation for the continuous version of the problem as well.

The $k$-center problem has also been studied in the $d$-dimensional Euclidean space. The problem is known to be NP-hard already for $d=2$ or for $k=2$, when the other parameter is part of the input~\cite{FowlerPT81, Megiddo90}. On the line, $k$-center can be solved in $O(n)$ time~\cite{Frederickson91} (apart from sorting), making it all the more surprising that the incremental hardness result holds already on the line. 

It is not known whether the NP-hardness of approximation beyond factor 2~\cite{HochbaumS85} extends to the Euclidean plane. In this special case, only an inapproximability below a factor $\approx 1.8$~\cite{Mentzer1, Mentzer2} has been established.

\paragraph{Related work.}

Optimization in an incremental setting similar to ours has been studied more broadly, see~\cite{BernsteinD0H22} and references therein; we note that tight results are scarce in this area. 
Incremental optimization has been considered for clustering and related facility location problems~\cite{ArulselvanMS15, LinNRW06, MettuP03, Plaxton06, DuXZ15}; however, these works relate to different clustering costs (e.g., $k$-median or $k$-means), or impose other restrictions on the solution, which make the results incomparable to ours. 

The hardness of computing a $2$-approximate incremental $k$-center solution \emph{in polynomial time} is noted in~\cite{Plaxton06}; as this follows immediately from~\cite{HochbaumS85}, it is not relevant to our study. 

%None of the papers studies however the incrementality of $k$-center, i.e., the question addressed in our papers

The requirement of arranging clusters \emph{hierarchically} imposes a geometric constraint similar in spirit to incrementality, although the results are not directly comparable. For the effect of imposing a hierarchy requirement in clustering we refer to~\cite{ArutyunovaR25} and references therein.

%(We note in passing, that this holds in general metrics, and the optimal approximation ratio, e.g., in $\mathbb{R}^2$ has not been precisely determined.)

\section{The lower bound}

We begin with a warmup construction that shows a lower bound of $\cR \geq 1.618$ (the golden ratio). We then show an improvement to $\cR \geq 1.8$, finally giving a tight lower bound of 2, in the limit.

\subsection{Balancing two levels}\label{sec21}

Let the input consist of five points on the line. Let $\phi = \frac{\sqrt{5}+1}{2} \approx 1.618$, the golden ratio, and place the points, denoted $p_1$ to $p_5$ in a left-to-right order, at positions $$\Big(0,~\phi-1,~\phi,~\phi+1,~2\phi\Big),$$ as shown in Figure~\ref{fig1}.

The optimum is easily seen to be $C = \{p_3\}$ for $k=1$ and $C = \{p_2, p_4\}$ for $k=2$, with costs $\OPT_1 = \phi$, and $\OPT_2 = 1$, respectively.

The choice of the first center already forces a tradeoff for an incremental algorithm. A choice other than $p_3$ results in a cost $\ALG_1 \geq \phi+1$ (as the farthest endpoint is then at distance at least $\phi+1$), for a ratio $$\cR \geq \frac{\ALG_1}{\OPT_1} \geq \frac{\phi+1}{\phi} = \phi > 1.618.$$

Choosing $p_3$ as the first center, the algorithm must take the second center on one side of $p_3$, so the nearest center to the opposite endpoint ($p_1$ or $p_5$) remains $p_3$ itself, at distance $\phi$. This yields $\ALG_2 = \phi$, for a ratio $$\cR \geq \frac{\ALG_2}{\OPT_2} \geq \frac{\phi}{1} > 1.618.$$

Thus, in all cases we have an approximation ratio of at least $\phi$ and no incremental algorithm can achieve $\cR < \phi$. 

\begin{figure} 
\centering\includegraphics[scale=0.35]{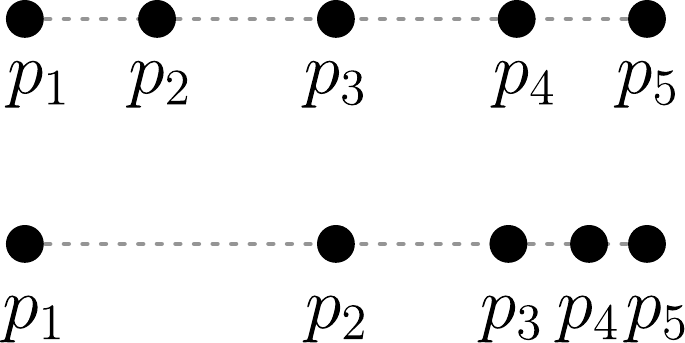}
\captionsetup{width=.92\linewidth}

\caption{Constructions of \S\,\ref{sec21} and \S\,\ref{sec22}, identically scaled.\label{fig1} }
\end{figure}

\subsection{Beyond two levels}\label{sec22}

The previous construction results from optimizing the distances such as to obtain the same competitive ratio at the levels $k=1$ and $k=2$ simultaneously. For broad classes of inputs (even in arbitrary metrics) one can show that the lower bound cannot be strengthened focusing only on these two levels. The key to the improvement is thus to consider more than two levels at once.

Let the input again consist of five points on the line, denoted $p_1$ to $p_5$. Let $\rho =2\cos(\frac{\pi}{7}) \approx 1.802$ be the root of the equation $x^3-x^2-2x+1=0$ in the interval $(1,2)$, and place the points at positions $$\Big(0, ~\rho, ~1 + \rho, ~{\rho^2}, ~2{\rho}  \Big),$$ as shown in Figure~\ref{fig1}.
The approximate values, scaled so that the largest coordinate equals $1$, are $\Big(0,~0.5,~0.7775,~0.9,~1 \Big).$

The optimum solutions for $k=1,2,3$ are easily seen to be $\{p_2\}$, $\{p_1,p_3\}$, and $\{p_1,p_2,p_4\}$, with costs $\OPT_1 = 0.5$, $\OPT_2 = 0.2775$, and $\OPT_3 = 0.1225$, where distances are computed with the normalized approximate values. 

An incremental algorithm has various possibilities. Choosing $p_1$ or $p_5$ as the first center already yields $\cR \geq \ALG_1/\OPT_1 = 2$, we can thus ignore these cases.

Choosing $p_4$ as the first center, we have $$\cR \geq \frac{\ALG_1}{\OPT_1} = \frac{0.9}{0.5} = 1.8,$$ which is the bound we are aiming for, so we can also conclude this case.

Choosing $p_2$ as the first center, we have $\ALG_1 = \OPT_1$, achieving optimality for $k=1$. Whichever point is chosen as the second center, either $p_1$ or $p_5$ will be at distance $0.5$ from its nearest center, resulting in $\ALG_2 = 0.5$, for a ratio $$\cR \geq \frac{\ALG_2}{\OPT_2} = \frac{0.5}{0.2775} > 1.801.$$ 

It remains to consider the case where $p_3$ is the first center. Now, $\ALG_1 = 0.7775$, for a ratio $0.7775 / 0.5 = 1.555$, still better than our target.

Choosing $p_2$, $p_4$ or $p_5$ as the second center leads to a cost $\ALG_2$ of $0.5$, $0.7775$, or $0.7775$, respectively. In either case, we have $$\cR \geq \frac{\ALG_2}{\OPT_2} \geq \frac{0.5}{0.2775} > 1.801.$$

It remains to consider $p_1$ as the second center. The solution with centers $\{p_1,p_3\}$ is optimal for $k=2$. Now any of the choices for the third center leads to a cost of $\min\{p_3-p_2, p_4-p_3\}  = 0.2225$, for a ratio $$\cR \geq \frac{\ALG_3}{\OPT_3} \geq \frac{0.2225}{0.1225} > 1.81.$$

Since we have, in all cases, a ratio of at least $1.8$, we conclude the analysis. Observe that there is a slight slack in the approximate numerical bounds; the sharper bound of $\cR \approx 1.802$ can be obtained by using the exact algebraic expressions given above.

\subsection{All levels at once}\label{sec23}

Fix now an integer $m \geq 3$, and let $\rho = \rho_m$ be the unique solution in the interval $(1,2)$ of the equation $$ \left( \frac{\rho-1}{\rho} \right)^{m-2} =~ 2 - \rho.$$

The fact that there is a unique solution in this interval follows from the observation that both sides are continuous and strictly monotone in the opposite way, with the left side smaller than the right side for $\rho=1$ and greater for $\rho = 2$.

Since the base of the left side is in $(0,1/2)$ in the range of interest, we have $2 - \rho_m \leq 2^{-m+2}$, and we can conclude that $\rho_m$ converges to $2$ rapidly, as $m$ increases.

Let us now choose an input to $k$-center with $m+1$ points $p_0, \dots, p_m$, where the first point is  $p_0 = 0$, and the gap between neighboring points $d_i = p_i - p_{i-1}$ is defined as $d_1 = \rho$, $d_{m} = \rho(2-\rho)$, and $$d_i = \left(\frac{\rho-1}{\rho}\right)^{i-2} \mathrm{~for~} 2 \leq i \leq m-1.$$     

It is interesting to observe that for $m=4$ we recover the example from \S\,\ref{sec22}. The case $m=3$ recovers only part of the warmup example from \S\,\ref{sec21}, an asymmetric construction with only the points $p_1,p_3,p_4,p_5$; a simple case-analysis shows that this already forces the ratio $\cR \geq \phi$.

We state a crucial identity for the tail sum of the gaps:

\begin{claim}
For $2 \leq i \leq m-1$, we have $\displaystyle\sum_{h=i}^m{d_h} = \rho \cdot d_i$.
\end{claim}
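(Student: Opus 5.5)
Write $q = \frac{\rho-1}{\rho}$, so $d_i = q^{i-2}$ for $2 \leq i \leq m-1$, $d_m = \rho(2-\rho)$, and the defining equation of $\rho$ reads $q^{m-2} = 2-\rho$. The plan is to evaluate the tail sum directly as a finite geometric series plus the last gap, then use the defining equation to rewrite the last gap in terms of $q^{m-2}$.

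First, compute the geometric part:
$$\sum_{h=i}^{m-1} d_h = \sum_{h=i}^{m-1} q^{h-2} = \frac{q^{i-2} - q^{m-2}}{1-q}.$$
Since $1 - q = 1/\rho$, this equals $\rho\,(q^{i-2} - q^{m-2})$. This formula also holds in the degenerate case $i = m-1$: the sum is the single term $q^{m-3}$, and $\rho(q^{m-3}-q^{m-2}) = \rho q^{m-3}(1-q) = q^{m-3}$.

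Second, add the last gap. By the defining equation, $d_m = \rho(2-\rho) = \rho\, q^{m-2}$. Hence
$$\sum_{h=i}^m d_h = \rho\,(q^{i-2} - q^{m-2}) + \rho\, q^{m-2} = \rho\, q^{i-2} = \rho \cdot d_i .$$

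The only step needing care is recognizing that $d_m$ was chosen precisely so that it cancels the $-\rho q^{m-2}$ correction. That is, $d_m$ equals $\rho$ times the "missing" infinite tail $\sum_{h \geq m} q^{h-2}/\rho$ of the geometric series, which is exactly what the equation $q^{m-2} = 2-\rho$ guarantees. Otherwise the argument is routine. The hypothesis $m \geq 3$ ensures the index range $2 \leq i \leq m-1$ is nonempty, and $\rho \in (1,2)$ ensures $q \neq 1$, so the geometric sum formula applies.
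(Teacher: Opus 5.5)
Your argument is correct. It differs from the paper's mainly in organization.

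The paper proves the identity by reverse induction on $i$:
\begin{itemize}
\item The base case $i=m-1$ uses the defining equation to write $d_{m-1} = (2-\rho)\rho/(\rho-1)$. This gives $d_m = (\rho-1)\,d_{m-1}$, hence $d_{m-1}+d_m = \rho\, d_{m-1}$.
\item The inductive step uses only the recurrence $d_{i+1} = d_i(\rho-1)/\rho$ to get $d_i + \rho\, d_{i+1} = \rho\, d_i$.
\end{itemize}
Your closed-form evaluation is that induction unrolled. Both proofs use the defining equation exactly once, for the same fact $d_m = (\rho-1)\,d_{m-1} = \rho\, q^{m-2}$.

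Your route explains the choice of $d_m$: it is exactly the infinite tail $\sum_{h\ge m} q^{h-2} = q^{m-2}/(1-q)$ of the geometric sequence. So the truncated sequence inherits the tail sums $\sum_{h\ge i} q^{h-2} = \rho\, q^{i-2}$ of the infinite one. The paper's route works only with the local recurrence between consecutive gaps, so it needs neither the summation formula nor the check $q\neq 1$.
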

\begin{proof}
We proceed by reverse induction on $i$. For $i=m-1$, noticing $d_{m-1} = (2-\rho)\rho/(\rho-1)$, we have $d_{m-1} + d_m = d_{m-1} + d_{m-1}(\rho-1) = \rho \cdot d_{m-1},$ as needed. 

Suppose that the claim holds for $i+1$. Now, since $d_{i+1} = d_{i} \cdot {(\rho-1)}/\rho$, we have $$\quad \quad \quad \quad  \quad \quad \quad \quad \quad \quad \quad  \sum_{h = i}^{m}{d_h} = d_i + \rho \cdot d_{i+1} = d_i + (\rho-1)\cdot d_i = \rho \cdot d_i.\quad \quad \quad \quad \quad \quad \quad \quad \quad \qedhere$$ 
\end{proof}

Observe that from the claim it follows that the total length (the distance from $p_0$ to $p_m$) is $$p_m - p_0 = d_1 + \sum_{h=2}^m{d_h} = \rho + \rho \cdot d_2 =  2\rho.$$

We further note that the sequence of gaps $d_1, \dots, d_m$ is strictly decreasing. This is immediate for $d_2, \dots, d_{m-1}$. To see $d_1 > d_2$, we need to verify $\rho > 1$, and to see $d_{m-1} > d_m$, we need to verify $\rho/(\rho-1) > \rho$, both of which are immediate.  

Let us now compute the optimum cost of this instance.

\begin{lemma}
The optimum $k$-center cost for the input $\{p_0,\dots,p_m\}$ defined above is $\OPT_k = d_k$, for all $k \leq m$.
\end{lemma}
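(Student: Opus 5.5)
We prove $\OPT_k \le d_k$ and $\OPT_k \ge d_k$ separately, using the Claim (tail sums $\sum_{h\ge i} d_h = \rho d_i$ for $2\le i\le m-1$), the fact that the total length is $2\rho$, and the strict decrease of the gaps.

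\emph{Upper bound.} For $k=1$, take the single center $p_1$. Its distance to $p_0$ is $d_1=\rho$, and its distance to $p_m$ is $\sum_{h\ge 2} d_h=\rho d_2 = \rho$. So the cost is $\rho=d_1$.

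For $2\le k\le m$, take the centers $p_0,\dots,p_{k-2}$ together with $p_k$.
\begin{itemize}
\item Every point $p_0,\dots,p_{k-2}$ is itself a center.
\item The point $p_{k-1}$ is at distance $d_k$ from $p_k$.
\item Every point to the right of $p_k$ is within $\sum_{h=k+1}^m d_h$ of $p_k$.
\end{itemize}
It remains to check that $\sum_{h=k+1}^m d_h \le d_k$.
\begin{itemize}
\item For $k=m$ the sum is empty.
\item For $k=m-1$ we need $d_m\le d_{m-1}$, which is the monotonicity of the gaps.
\item For $2\le k\le m-2$, the Claim gives $\sum_{h\ge k+1} d_h=\rho d_{k+1}=(\rho-1)d_k\le d_k$, since $\rho<2$.
\end{itemize}
Hence $\OPT_k\le d_k$.

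\emph{Lower bound.} This is the main part. I use the standard fact that on the line an optimal $k$-center solution partitions the sorted points into $k$ contiguous blocks, each served by a center inside it. A block $p_a,\dots,p_b$ served by $p_c$ has cost $\max(p_c-p_a,\,p_b-p_c)$, which is at least half its span. Suppose for contradiction that some solution has cost $r<d_k$.
\begin{itemize}
\item Since $r<d_k\le d_j$ for every $j\le k$, no block can contain both endpoints of any of the gaps $d_1,\dots,d_k$: a block containing a gap of length $g$ forces the center to be at distance at least $g$ from some point, unless the center is exactly an endpoint of that gap. So I will argue by the block structure instead of halving.
\item A block served by $p_c$ that contains $p_{c-1}$ costs at least $d_c$, and one that contains $p_{c+1}$ costs at least $d_{c+1}$.
\item Consequently, for $c\le k$ the center $p_c$ can only cover points to its right, and only through gaps $d_j$ with $j\ge c+1$, with $d_{c+1}\le r$ required, so $c+1>k$.
\item Therefore each of $p_0,\dots,p_{k-1}$ must be its own singleton block, which uses up $k$ centers. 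The point $p_k$ is then uncovered, a contradiction.
\end{itemize}
More carefully, the ``left side'' argument is the following: any center $p_c$ with $c\le k-1$ cannot cover $p_{c+1}$, since $d_{c+1}\ge d_k>r$, nor $p_{c-1}$, since $d_c>r$. So the only way to cover $p_0,\dots,p_{k-1}$ within $k-1$ centers would be to cover some $p_j$ with $j\le k-1$ from a center farther away, and that center is at distance at least $d_j$ or $d_{j+1}$, which is at least $d_k>r$. So $p_0,\dots,p_{k-1}$ are all centers, leaving none for $p_k$.

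The main obstacle is making this covering argument airtight, in particular the claim that covering $p_j$ from a non-adjacent center costs at least the adjacent gap. This holds simply because the distance spans that gap, and the gaps $d_1,\dots,d_k$ are all at least $d_k$ by monotonicity. The case $k=1$ is the exception: there the lower bound comes from half the total span, $2\rho/2=\rho=d_1$, a bound that the center $p_1$ attains.
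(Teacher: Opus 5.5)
Your proof is correct and follows the paper's argument. The upper bound uses the same center sets $\{p_1\}$ and $\{p_0,\dots,p_{k-2},p_k\}$; your use of gap monotonicity at $k=m-1$ is cleaner, since the Claim does not cover $i=m$. The lower bound uses the same observation that each of $p_0,\dots,p_{k-1}$ is isolated by gaps of length at least $d_k$, so each must itself be a center, which exhausts all $k$ centers.

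The block-partition and half-span detour is unnecessary. Your caveat ``unless the center is exactly an endpoint of that gap'' is wrong, since the other endpoint is then at distance exactly $g$, but it is harmless. Also, $k=1$ is not an exception: your isolation argument, ending with $p_k$ at distance $d_k$ from $p_{k-1}$, works uniformly for all $1\le k\le m$.
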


\begin{proof}
We start with the lower bound $\OPT_k \geq d_k$. The first $k$ points $\{p_0, \dots, p_{k-1}\}$ are separated by gaps of at least $d_k$ from their neighbors, so if one of these points is not picked as a center, we incur a cost of at least $d_k$.
Assume therefore that the first $k$ points are picked as centers. Then the distance of the last point $p_{m}$ to its nearest center $p_{k-1}$ is $\sum_{h=k}^{m}{d_h} = \rho \cdot d_k > d_k$.

To show that an optimal cost of $d_k$ can be attained, consider the solution $\{p_1\}$ for $k=1$, $\{p_0, p_1, \dots, p_{k-2}, p_k\}$ for $2 \leq k \leq m-1$, and $\{p_0, p_1, \dots, p_{m-1}\}$ for $k=m$.

In the first case, the cost is the larger of $p_1 - p_0 = d_1$ and the right tail $p_m-p_1$, which equals $\rho \cdot d_2 = d_1$.

In the middle case, the cost is the larger of $d_{k}$ (to cover $p_{k-1}$) and the right tail $p_m-p_{k}$,  which equals $\rho \cdot d_{k+1} = (\rho-1)d_k < d_k$.

In the last case, the cost is $p_m-p_{m-1}=d_m$, as needed.

The lower and upper bounds together imply that $\OPT_k = d_k$ for all values of $k$.
\end{proof}

It remains to bound the cost of an incremental algorithm at different levels of $k$. One may already observe that the optimal solution of the previous lemma cannot be built incrementally. We make this claim more strongly as follows. 

\begin{lemma}
No incremental algorithm can attain cost $\ALG_k < \rho \cdot d_k$ simultaneously for all values $k = 1,\dots,m-1$. 
\end{lemma}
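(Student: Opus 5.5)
The plan is to argue by contradiction. Suppose an incremental algorithm has nested center sets $C_1 \subset C_2 \subset \dots \subset C_{m-1}$ with $|C_k| = k$ and $\ALG_k < \rho\, d_k$ for every $k \le m-1$. I would first show that each such level is forced into a rigid shape, then show that the nesting pushes the single ``free'' center all the way to $p_m$, and finally show that this fails at level $m-1$.

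\textbf{Step 1 (left points are forced).} I would show that at level $k$ all of $p_0,\dots,p_{k-2}$ must be centers. If $p_i$ with $i \le k-2$ is not a center, its distance to the nearest other point is at least $\min(d_i, d_{i+1}) = d_{i+1}$, since the gaps are strictly decreasing. It then suffices to check $d_{i+1} \ge \rho\, d_k$ for every $i+1 \le k-1$.
\begin{itemize}
\item For $i+1 = 1$ we have $d_1 = \rho = \rho\, d_2 \ge \rho\, d_k$.
\item For $2 \le i+1 \le k-1$ we have $d_{i+1} \ge d_{k-1} = d_k \cdot \rho/(\rho-1) > \rho\, d_k$, because $\rho < 2$.
\end{itemize}
So every such $p_i$ must indeed be a center.

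\textbf{Step 2 (the right end needs one more center).} The point $p_m$ must be within distance $< \rho\, d_k$ of some center. For $k \ge 2$, the tail-sum Claim gives $p_m - p_{k-1} = \sum_{h=k}^m d_h = \rho\, d_k$. For $k=1$, the cost of $p_0$ alone is $2\rho > \rho^2 = \rho\, d_1$. In both cases some center must have index at least $k$. Counting centers, Steps 1 and 2 give
\[
C_k = \{p_0,\dots,p_{k-2}\} \cup \{q_k\}, \qquad q_k = p_j \text{ for some } j \ge k .
\]

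\textbf{Step 3 (nesting pushes $q$ to $p_m$).} Since $C_k \subset C_{k+1}$, the point $q_k$ lies in $C_{k+1} = \{p_0,\dots,p_{k-1}\} \cup \{q_{k+1}\}$. Its index is at least $k$, so $q_k \notin \{p_0,\dots,p_{k-1}\}$, and hence $q_k = q_{k+1}$. Thus all the $q_k$ equal a single point $q$, and $q$ has index at least $k$ for every $k \le m-1$. I would then need to squeeze out the last index, showing that $q = p_m$ is forced.

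\textbf{Step 4 (contradiction at level $m-1$).} Assuming $q = p_m$, we have $C_{m-1} = \{p_0,\dots,p_{m-3}, p_m\}$, and I would look at the uncovered point $p_{m-2}$. Its distance to the left is $d_{m-2} > \rho\, d_{m-1}$, as in Step 1. Its distance to the right is $d_{m-1} + d_m = \rho\, d_{m-1}$, again by the Claim. So $\ALG_{m-1} \ge \rho\, d_{m-1}$, which is the desired contradiction.

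The main obstacle is the boundary bookkeeping, and in particular the step forcing $q = p_m$. Applying Step 2 at $k = m-1$ only yields index $\ge m-1$, which leaves the alternative $q = p_{m-1}$. I would handle $q = p_{m-1}$ at level $m-1$ separately: there $p_m$ is at distance $d_m < \rho\, d_{m-1}$, which is fine, but then $p_{m-2}$ is at distance $\min(d_{m-2}, d_{m-1})$ from the centers. This case is the point where the argument must be checked most carefully against the exact defining equation of $\rho$. If it does not give a contradiction directly, I would refine Step 2 at level $m-1$ using that equation, or else exploit the freedom at level $m-2$, where $C_{m-2} = \{p_0,\dots,p_{m-4}, q\}$, to obtain the bound. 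The small cases ($k = 1, 2$ and $m = 3$) also need to be checked by hand.
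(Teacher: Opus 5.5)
Your Steps 1--3 are correct and follow the same structure as the paper. At level $k$ the centers must be $p_0,\dots,p_{k-2}$ plus one point of index at least $k$. Nesting then forces that extra point to be a single $q$ whose index is at least $m-1$.

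The gap is the case $q=p_{m-1}$, and neither remedy you propose can close it. Consider the nested sequence $C_k=\{p_0,\dots,p_{k-2},p_{m-1}\}$. By the Claim, for every $2\le k\le m-1$,
\[
\ALG_k\le\max\{p_{m-1}-p_{k-1},\,d_m\}=\rho\, d_k-d_m<\rho\, d_k .
\]
At $k=m-1$ this is exactly the optimal solution from the previous lemma, with cost $d_{m-1}$. So no argument at level $m-1$, at level $m-2$, or at any level $k\ge 2$ can produce the contradiction.

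The missing observation is at level $k=1$, which you used only to exclude $p_0$. Since $C_1=\{q\}$, the choice $q=p_{m-1}$ gives
\[
\ALG_1=p_{m-1}-p_0=2\rho-\rho(2-\rho)=\rho^2=\rho\, d_1 ,
\]
which violates the strict inequality. This uses only $d_m=\rho(2-\rho)$, not the defining equation of $\rho$. Likewise $q=p_m$ gives $\ALG_1=2\rho$, so your Step 4 is unnecessary.

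Thus the first center must lie in $\{p_1,\dots,p_{m-2}\}$, which immediately contradicts the conclusion of your Step 3. This is exactly how the paper finishes: it takes the first center $p_j$ with $1\le j\le m-2$ and observes that the level-$(j+1)$ solution must omit $p_j$.
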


The two lemmas together imply Theorem~\ref{thm1}, since $\cR \geq \rho_m$ and $\rho_m \rightarrow 2$ as $m \rightarrow \infty$.

\begin{proof}
For $k=1$, the chosen center must be one of the points $p_1, p_2, \dots, p_{m-2}$. 

Indeed, choosing the endpoints $p_0$ or $p_m$ would be too far from the opposite endpoint (distance $2\rho = 2\OPT_1$).
Choosing $p_{m-1}$ would lead to cost $\ALG_1 \geq 2\rho - d_m = 2\rho - \rho(2-\rho) = \rho^2$. This equals $\rho \cdot d_1 = \rho \cdot \OPT_1$ with approximation ratio equal to $\rho$, not strictly smaller.

Now consider any $k$ with $2 \leq k \leq m-1$. A set of $k$ centers with cost strictly less than $$\rho \cdot \OPT_k = \rho \cdot d_k = \sum_{h=k}^m{d_h}$$ must be of the form $\{p_0, p_1, \dots, p_{k-2},p_j\}$, for some $j \geq k$. This is because omitting any of the points $p_0, \dots, p_{k-2}$ would lead to a cost $\ALG_k$ of at least $d_{k-1}$, and therefore, $$\frac{\ALG_k}{\OPT_k} \geq \frac{d_{k-1}}{d_k} = \frac{\rho}{\rho-1} > \rho,$$
with an approximation ratio larger than allowed. 

Moreover, the remaining center must be from $\{p_{k-1}, \dots, p_m\}$, but it cannot be either of the two extreme points. Choosing $p_{k-1}$ would lead to a cost $\ALG_k \geq p_m - p_{k-1} = \rho \cdot d_{k}$, and choosing $p_m$ would lead to a cost $\ALG_k \geq p_{k-1} - p_{k-2} = d_{k-1} > \rho \cdot d_k$, in both cases with a ratio at or above the forbidden threshold $\rho$. 

Let the first center chosen by the algorithm be $p_j$, where $1 \leq j \leq m-2$ must hold from our earlier discussion. Consider the solution of the algorithm for $k = j+1$. As we have shown, the solution must include $p_0, p_1, \dots, p_{j-1}$, and must omit $p_j$. But since $p_j$ was already chosen in the first step (and no recourse is allowed), it cannot be omitted, a contradiction. This concludes the proof. 
\end{proof}

\section{Conclusion and the use of AI} \label{sec3}

The question addressed in this paper has occupied the author for some time. Having obtained the warmup construction of \S\,\ref{sec21}, and discussing it with several students and colleagues, opinions were fairly evenly split between those expecting the true bound to be $2$ and those favoring $\phi$ or some intermediate value.

The results described in \S\,\ref{sec22} and \ref{sec23} (i.e., the $\approx 1.8$ construction and its generalization) were found by ChatGPT. The AI model found both constructions from a single prompt, without further interaction or guidance. The prompt described the question and the warmup construction at a level similar to communicating with a competent colleague.
Access to the model and assistance in using it was kindly provided by Alexandr Andoni, which is gratefully acknowledged.

\medskip

The arguably elegant and (to our knowledge) original solution seems, in hindsight, quite accessible. The writeup is the author's own, verifying and expanding on proof steps and motivating some choices in the argument. Yet, it should be emphasized that the output of the AI model was already correct and clearly written, if somewhat compact. 

Several factors and biases may have made it difficult for the author to find the solution earlier. First, the lower bound of $\phi$ seemed natural and plausibly optimal, suggesting to look for an improved algorithm instead. The known construction was symmetric; the asymmetric, four-point version of the warmup construction perhaps more readily leads one towards the generalized geometrically decreasing sequence. Finally, the improvement requires handling more than two $k$-values simultaneously, which (incorrectly) seemed to bring no gains. AI systems may be particularly well-suited to finding lower-bound constructions of this kind.

The author also thanks Heiko Röglin for earlier insightful comments on the problem.

\small
\bibliographystyle{alphaurl}
\bibliography{article}

\end{document}